\documentclass[12pt]{article}

\usepackage[T1]{fontenc}
\usepackage{lmodern}

\usepackage{setspace}
\usepackage[margin=1.25in]{geometry}

\usepackage[dvipsnames]{xcolor}
\usepackage{pdfpages}
\usepackage{float}
\usepackage{multirow}
\usepackage{caption}
\usepackage{subcaption}
\usepackage{epigraph}

\usepackage{mathtools}
\usepackage{amssymb}
\usepackage{amsthm}
\usepackage{amsfonts}
\usepackage{aliascnt}
\usepackage{accents}
\usepackage{dutchcal}
\usepackage{bbm}

\usepackage{enumitem}
\usepackage{sgame}
\usepackage{tikz}
\usepackage{tikz-cd}
\usetikzlibrary{calc,shapes,arrows}
\usepackage{tcolorbox}
\usepackage{listings}
\usepackage[normalem]{ulem}

\usepackage[round]{natbib}
\newcommand{\N}{\mathbb{N}}

\newtheorem{theorem}{Theorem}[section]

\newaliascnt{proposition}{theorem}
\newtheorem{proposition}[proposition]{Proposition}
\aliascntresetthe{proposition}

\newaliascnt{lemma}{theorem}
\newtheorem{lemma}[lemma]{Lemma}
\aliascntresetthe{lemma}

\newaliascnt{corollary}{theorem}
\newtheorem{corollary}[corollary]{Corollary}
\aliascntresetthe{corollary}

\newaliascnt{claim}{theorem}

\aliascntresetthe{claim}

\theoremstyle{definition}

\newaliascnt{definition}{theorem}
\newtheorem{definition}[definition]{Definition}
\aliascntresetthe{definition}

\newaliascnt{example}{theorem}

\aliascntresetthe{example}

\newaliascnt{assumption}{theorem}
\newtheorem{assumption}[assumption]{Assumption}
\aliascntresetthe{assumption}

\newaliascnt{condition}{theorem}

\aliascntresetthe{condition}

\newaliascnt{question}{theorem}

\aliascntresetthe{question}

\newaliascnt{remark}{theorem}

\aliascntresetthe{remark}

\newaliascnt{remarks}{theorem}

\aliascntresetthe{remarks}

\newaliascnt{aside}{theorem}

\aliascntresetthe{aside}

\newaliascnt{note}{theorem}

\aliascntresetthe{note}

\usepackage{thmtools}
\usepackage{thm-restate}

\usepackage{hyperref}

\hypersetup{
    colorlinks=true,
    linkcolor=OrangeRed,
    filecolor=Thistle,
    urlcolor=Thistle,
    citecolor=Thistle,
}

\usepackage[nameinlink]{cleveref}

\crefname{theorem}{theorem}{theorems}
\Crefname{theorem}{Theorem}{Theorems}
\crefname{proposition}{proposition}{propositions}
\Crefname{proposition}{Proposition}{Propositions}
\crefname{lemma}{lemma}{lemmas}
\Crefname{lemma}{Lemma}{Lemmas}
\crefname{corollary}{corollary}{corollaries}
\Crefname{corollary}{Corollary}{Corollaries}
\crefname{claim}{claim}{claims}
\Crefname{claim}{Claim}{Claims}
\crefname{definition}{definition}{definitions}
\Crefname{definition}{Definition}{Definitions}
\crefname{example}{example}{examples}
\Crefname{example}{Example}{Examples}
\crefname{assumption}{assumption}{assumptions}
\Crefname{assumption}{Assumption}{Assumptions}

\makeatletter
\let\cref@old@isrefconsecutive\cref@isrefconsecutive
\def\cref@isrefconsecutive#1#2{%
  \begingroup
    \def\cref@assumptiontype{assumption}%
    \cref@gettype{#1}{\cref@typea}%
    \ifx\cref@typea\cref@assumptiontype
      \endgroup
      \@cref@refconsecutivefalse
    \else
      \endgroup
      \cref@old@isrefconsecutive{#1}{#2}%
    \fi
}
\makeatother

\crefname{condition}{condition}{conditions}
\Crefname{condition}{Condition}{Conditions}
\crefname{question}{question}{questions}
\Crefname{question}{Question}{Questions}
\crefname{remark}{remark}{remarks}
\Crefname{remark}{Remark}{Remarks}
\crefname{remarks}{remarks}{remarks}
\Crefname{remarks}{Remarks}{Remarks}
\crefname{aside}{aside}{asides}
\Crefname{aside}{Aside}{Asides}
\crefname{note}{note}{notes}
\Crefname{note}{Note}{Notes}
\crefname{appendix}{appendix}{appendices}
\Crefname{appendix}{Appendix}{Appendices}

\newcommand{\secref}[1]{\hyperref[#1]{\S\ref*{#1}}}

\definecolor{backcolour}{rgb}{0.63,0.79,0.95}
\lstdefinestyle{mystyle}{
  backgroundcolor=\color{backcolour},
  basicstyle=\ttfamily\footnotesize,
  breakatwhitespace=false,
  breaklines=true,
  captionpos=b,
  keepspaces=true,
  numbers=left,
  numbersep=5pt,
  showspaces=false,
  showstringspaces=false,
  showtabs=false,
  tabsize=2
}
\begin{document} 
\title{TANSTAAFE\footnote{There Ain’t No Such Thing as a Free Equilibrium}}
\author{Mark Whitmeyer\thanks{Arizona State University. Email: \href{mailto:mark.whitmeyer@gmail.com}{mark.whitmeyer@gmail.com}. For KS. I used ChatGPT as one would an RA.}}
\date{\today}
\maketitle

\begin{abstract}
I argue that there is a sense in which universal equilibrium (defined loosely) existence in games is incompatible with eschewing strictly dominated strategies and a sense in which it isn't.
\end{abstract}

\section{Equilibrium and its Discontents}

All is well in finite games: they possess equilibria in mixed strategies. Not so in infinite games. One response is to enlarge the class of probabilities from countably additive to finitely additive ones. This restores equilibrium existence, but it creates a sharp distinction: a finitely additive probability can assign probability zero to every action in an infinite set while assigning probability one to the set itself.

For dominance, this distinction can be pathological. A finitely additive equilibrium may assign probability zero to every strictly dominated action and yet assign probability one to the set of strictly dominated actions. Each dominated action is then ignored in isolation, even though play is concentrated entirely on dominated actions. Is this pathology specific to a particular equilibrium concept?

No. I show that this distinction is unavoidable for dominated actions. No solution rule--even a set-valued one--can be universally nonempty, satisfy a minimal equilibrium coherence requirement for zero-sum games, and ignore the entire set of actions strictly dominated by finite lotteries.

A weaker requirement can always be met. Rather than asking that all dominated actions be ignored at once (which I term set-nullity), ask only that each be ignored separately (point-nullity). Every bounded game with finitely many players has a Nash equilibrium in finitely additive probabilities with this property. The result is sharp: in the counterexample I construct to thwart set-nullity, every dominated action is ignored, yet the dominated set itself receives probability one.

\section{Setup \& Preliminaries}\label{sec:evaluations-dominance}

A normal-form game, here, simply a \textit{game}, is a triple \(\Gamma=(I,(A_i)_{i\in I},(u_i)_{i\in I})\). As is standard, I write \(A\coloneqq\times_{i\in I}A_i\), \(A_{-i}\coloneqq\times_{j\neq i}A_j\), and \(a=(a_i,a_{-i})\in A\). I impose the following minimal standing assumption:
\begin{assumption}\label{ass:bounded-game}
The player set \(I\) is finite and nonempty, each action set \(A_i\) is nonempty, and each payoff function \(u_i\colon A\to\mathbb{R}\) is bounded.
\end{assumption}

For a nonempty set \(X\), let \(\ell^\infty(X)\) denote the Banach space of bounded real-valued functions on \(X\), equipped with the \(\sup\) norm \(\lVert f\rVert_\infty \coloneqq \sup_{x\in X}|f(x)|\). For \(B\subseteq X\), let \(\mathbf 1_B\in\ell^\infty(X)\) denote its indicator. 

\begin{definition}\label{def:evaluation-nullity}
An \textit{evaluation} on a nonempty set \(X\) is a map
\(\Lambda\colon\ell^\infty(X)\to\mathbb R\).
\end{definition}

A set \(B\subseteq X\) is \textit{\(\Lambda\)-null} if, for every
\(f,g\in\ell^\infty(X)\),
\[f=g\text{ on }X\setminus B \qquad\Longrightarrow\qquad \Lambda(f)=\Lambda(g).\] 
That is, altering consequences on \(B\) does not affect the evaluation. For nonempty \(X\) and an evaluation \(\Lambda\colon\ell^\infty(X)\to\mathbb R\), define
\[\mathcal N_\Lambda \coloneqq \{B\subseteq X\colon B\text{ is }\Lambda\text{-null}\}.\]

\begin{definition}
    An \textit{ideal}, \(\mathcal I\), on \(X \neq \emptyset\) is a set of subsets of \(X\) that is
\begin{enumerate}[noitemsep]
    \item Closed under finite unions: \(\emptyset \in \mathcal I\) and \(S_1, S_2 \in \mathcal I\) \(\Longrightarrow\) \(S_1 \cup S_2 \in \mathcal I\); and 
    \item Downward-closed: \(S_2 \in \mathcal I\) and \(S_1 \subseteq S_2\) \(\Longrightarrow\) \(S_1 \in \mathcal I\).
\end{enumerate}
An ideal is a \textit{proper ideal} if, moreover, \(X \notin \mathcal I\).
\end{definition}

\begin{lemma}\label{lem:null-ideal}
The set \(\mathcal N_\Lambda\) is an ideal. It is a proper ideal if and only if \(\Lambda\) is nonconstant.
\end{lemma}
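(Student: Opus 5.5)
We verify the ideal axioms directly from the definition of \(\Lambda\)-nullity, and then handle properness separately.

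\emph{The empty set and downward closure.} The empty set is \(\Lambda\)-null trivially. If \(f=g\) on \(X\setminus\emptyset=X\), then \(f=g\), so \(\Lambda(f)=\Lambda(g)\). For downward closure, let \(S_1\subseteq S_2\) with \(S_2\in\mathcal N_\Lambda\). If \(f=g\) on \(X\setminus S_1\), then \(f=g\) on the smaller set \(X\setminus S_2\subseteq X\setminus S_1\). Hence \(\Lambda(f)=\Lambda(g)\) by nullity of \(S_2\).

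\emph{Finite unions.} This is the step that needs an actual construction, because \(\Lambda\) is an arbitrary map with no linearity. Let \(S_1,S_2\in\mathcal N_\Lambda\), and let \(f,g\in\ell^\infty(X)\) agree off \(S_1\cup S_2\). I would build an intermediate bounded function \(h\) by \(h\coloneqq g\) on \(S_1\) and \(h\coloneqq f\) on \(X\setminus S_1\). This \(h\) is bounded, with \(\lVert h\rVert_\infty\le\max(\lVert f\rVert_\infty,\lVert g\rVert_\infty)\).
\begin{itemize}[noitemsep]
\item The functions \(f\) and \(h\) agree on \(X\setminus S_1\), so \(\Lambda(f)=\Lambda(h)\) by nullity of \(S_1\).
\item The functions \(h\) and \(g\) agree on \(X\setminus S_2\). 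Take \(x\notin S_2\). If \(x\in S_1\), then \(h(x)=g(x)\) by construction. If \(x\notin S_1\), then \(x\notin S_1\cup S_2\), so \(h(x)=f(x)=g(x)\). Hence \(\Lambda(h)=\Lambda(g)\) by nullity of \(S_2\).
\end{itemize}
Chaining the two equalities gives \(\Lambda(f)=\Lambda(g)\), so \(S_1\cup S_2\in\mathcal N_\Lambda\). Induction then gives closure under all finite unions.

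\emph{Properness.} The set \(X\) is \(\Lambda\)-null exactly when \(\Lambda(f)=\Lambda(g)\) for all \(f,g\in\ell^\infty(X)\). This is because the condition ``\(f=g\) on \(X\setminus X=\emptyset\)'' holds for every pair \(f,g\). So \(X\in\mathcal N_\Lambda\) if and only if \(\Lambda\) is constant. Equivalently, \(\mathcal N_\Lambda\) is proper if and only if \(\Lambda\) is nonconstant.

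The only real obstacle is the union step. Without any additivity of \(\Lambda\), one cannot compare \(f\) and \(g\) in a single move. The hybrid \(h\) splits the change into a modification supported on \(S_1\), followed by one supported on \(S_2\).
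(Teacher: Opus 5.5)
Your proof is correct and follows the paper's argument essentially step for step: the union step uses the same hybrid function \(h\) (equal to \(g\) on \(S_1\) and to \(f\) off \(S_1\)), and properness comes from the fact that every pair of functions agrees on \(X\setminus X=\emptyset\). Your explicit check that \(h\) is bounded, so that \(h\in\ell^\infty(X)\), is a small detail the paper leaves implicit.
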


\begin{proof}[Proof of \Cref{lem:null-ideal}]
If \(f=g\) on \(X\), \(\Lambda(f)=\Lambda(g)\). Hence, the empty set is \(\Lambda\)-null.

Suppose \(B\) is \(\Lambda\)-null and \(C\subseteq B\). If \(f=g\) on
\(X\setminus C\), then \(f=g\) on \(X\setminus B\), so
\(\Lambda(f)=\Lambda(g)\). Thus, \(C\) is \(\Lambda\)-null.

Now suppose \(B\) and \(C\) are \(\Lambda\)-null and \(f=g\) on \(X\setminus(B\cup C)\). Define \(h\in\ell^\infty(X)\) by
\[
h(x)
\coloneqq
\begin{cases}
g(x), \quad &\text{if} \quad x\in B,\\
f(x), \quad &\text{if} \quad x\notin B.
\end{cases}
\]
The functions \(f\) and \(h\) agree on \(X\setminus B\), so
\(\Lambda(f)=\Lambda(h)\). Moreover, \(h=g\) on \(B\setminus C\) by construction, and \(h=f=g\) on \(X\setminus(B\cup C)\). Hence, \(h\) and \(g\) agree on \(X\setminus C\), so \(\Lambda(h)=\Lambda(g)\). Therefore, \(\Lambda(f)=\Lambda(g)\), and \(B\cup C\) is \(\Lambda\)-null.

Finally, because \(X\setminus X=\emptyset\), any two functions agree on \(X\setminus X\). Consequently, \(X\) is \(\Lambda\)-null if and only if \(\Lambda(f)=\Lambda(g)\) for every \(f,g\in\ell^\infty(X)\), if and only if \(\Lambda\) is constant. 
\end{proof}

\begin{definition}\label{def:finite-lottery}
For \(X \neq \emptyset\), let \(\Delta_f(X)\) denote the set of \textit{finite lotteries}: finitely supported functions \(q\colon X\to[0,1]\) satisfying \(\sum_{x\in\operatorname{supp}(q)}q(x)=1\).
\end{definition}
For \(q_i\in\Delta_f(A_i)\) and
\(a_{-i}\in A_{-i}\), define
\[
u_i(q_i,a_{-i})
\coloneqq
\sum_{b_i\in\operatorname{supp}(q_i)}
q_i(b_i)u_i(b_i,a_{-i}).
\]

\begin{definition}\label{def:finite-strict-dominance}
An action \(a_i\in A_i\) is \textit{strictly dominated} in \(\Gamma\) if there is a finite lottery \(q_i\in\Delta_f(A_i)\) such that \(u_i(q_i,a_{-i})>u_i(a_i,a_{-i})\) for every \(a_{-i}\in A_{-i}\). Let \(D_i(\Gamma)\) denote the set of strictly dominated actions of player \(i\).
\end{definition}

For nonempty subsets \(B_i\subseteq A_i\), write \(\Gamma|_B\) for the restriction of \(\Gamma\) to \(B\coloneqq\times_{i\in I}B_i\).

\begin{definition}\label{def:iterated-finite-dominance-core}
Set \(A_i^0\coloneqq A_i\). Whenever \(A^k\coloneqq\times_{i\in I}A_i^k\) is nonempty, define \(A_i^{k+1} \coloneqq A_i^k\setminus D_i(\Gamma|_{A^k})\); and if \(A^k=\emptyset\), set \(A_i^{k+1}\coloneqq\emptyset\) for every \(i\in I\). The set of actions that survive every finite round is \(A_i^\infty \coloneqq \bigcap_{k=0}^{\infty}A_i^k\).
\end{definition}

\begin{definition}\label{def:point-set-nullity}
For each \(i\in I\), let \(\Lambda_i\colon\ell^\infty(A_i)\to\mathbb R\). The tuple \((\Lambda_i)_{i\in I}\) has \textit{point-nullity} for \(\Gamma\) if every singleton \(\{a_i\}\) with \(a_i\in D_i(\Gamma)\) is \(\Lambda_i\)-null. It has \textit{set-nullity} for \(\Gamma\) if \(D_i(\Gamma)\) is \(\Lambda_i\)-null for every \(i\in I\).
\end{definition}
\Cref{lem:null-ideal} reveals that set-nullity implies point-nullity. The converse can fail because null sets need not be closed under countable unions. On the other hand, they are equivalent when the sets of strictly-dominated actions are finite.

\begin{corollary}\label{cor:finite-point-set-nullity-equivalence}
Suppose \(D_i(\Gamma)\) is finite for every \(i\in I\). For every tuple \((\Lambda_i)_{i\in I}\), where \(\Lambda_i\colon\ell^\infty(A_i)\to\mathbb R\), point-nullity for \(\Gamma\) is equivalent to set-nullity for \(\Gamma\).
\end{corollary}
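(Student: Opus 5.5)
The corollary is a direct consequence of \Cref{lem:null-ideal}, and I would prove the two implications separately.

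\emph{Set-nullity implies point-nullity.} Fix a player \(i\) and an action \(a_i\in D_i(\Gamma)\). By set-nullity, \(D_i(\Gamma)\) is \(\Lambda_i\)-null. The singleton \(\{a_i\}\) is a subset of \(D_i(\Gamma)\). By \Cref{lem:null-ideal}, \(\mathcal N_{\Lambda_i}\) is an ideal and so is downward-closed. Hence \(\{a_i\}\) is \(\Lambda_i\)-null. This direction does not use finiteness; it is the general remark already made after \Cref{def:point-set-nullity}.

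\emph{Point-nullity implies set-nullity.} Fix \(i\). Since \(D_i(\Gamma)\) is finite, write it as \(\{a_i^1,\dots,a_i^n\}\) with \(n\ge 0\).
\begin{itemize}[noitemsep]
    \item If \(n=0\), then \(D_i(\Gamma)=\emptyset\), which lies in every ideal by \Cref{lem:null-ideal}.
    \item If \(n\ge 1\), point-nullity says each \(\{a_i^k\}\) lies in \(\mathcal N_{\Lambda_i}\).
\end{itemize}
In the second case, closure of \(\mathcal N_{\Lambda_i}\) under pairwise unions, applied by induction on \(n\), gives \(D_i(\Gamma)=\bigcup_{k=1}^n\{a_i^k\}\in\mathcal N_{\Lambda_i}\). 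Doing this for every \(i\in I\) yields set-nullity.

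Both steps are routine once \Cref{lem:null-ideal} is available, so I expect no real obstacle. The only points needing care are the empty case and the fact that the induction runs over finitely many unions. Finiteness is essential here, because ideals need not be closed under countable unions.
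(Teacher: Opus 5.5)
Your proof is correct and matches the paper's argument: downward closure of \(\mathcal N_{\Lambda_i}\) gives set-nullity \(\Rightarrow\) point-nullity, and closure under finite unions from \Cref{lem:null-ideal} gives the converse. Your explicit treatment of the case \(D_i(\Gamma)=\emptyset\) is a small piece of care that the paper leaves implicit.
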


\begin{proof}[Proof of
\Cref{cor:finite-point-set-nullity-equivalence}]
Set-nullity implies point-nullity because nullity is downward closed. Conversely, suppose the tuple has point-nullity. For each \(i\in I\), \(D_i(\Gamma) = \bigcup_{a_i\in D_i(\Gamma)}\{a_i\}\), which is a finite union of \(\Lambda_i\)-null sets, so \Cref{lem:null-ideal} implies that \(D_i(\Gamma)\) is \(\Lambda_i\)-null. Hence, the tuple has set-nullity.
\end{proof}

\begin{definition}\label{def:equilibrium-solution-correspondence}
For a class of games (satisfying \Cref{ass:bounded-game}), \(\mathcal G\), a \textit{solution correspondence} \(S\) on \(\mathcal G\) assigns to each \(\Gamma \in \mathcal{G}\) a set \(S(\Gamma)\). Every element of \(S(\Gamma)\) is a tuple \((\Lambda_i)_{i\in I}\), where \(\Lambda_i\colon\ell^\infty(A_i)\to\mathbb R\) for every \(i\in I\). The correspondence 
\begin{itemize}[noitemsep]
    \item is \textit{universally nonempty} if \(S(\Gamma)\neq\emptyset\) for every \(\Gamma\in\mathcal G\); and
    \item satisfies \textit{point-nullity (set-nullity)} if, for every \(\Gamma\in\mathcal G\), every \((\Lambda_i)_{i\in I}\in S(\Gamma)\) has point-nullity (set-nullity).
\end{itemize}
\end{definition}
For a bounded two-player zero-sum game \(\Gamma=(\{1,2\},(A_1,A_2),(u_1,u_2))\), with \(u_2=-u_1\), let \(\mathcal E(\Gamma)\) denote the set of pairs \((\Lambda_1,\Lambda_2)\) (\(\Lambda_i\colon\ell^\infty(A_i)\to\mathbb R\) for \(i \in \left\{1,2\right\}\)) such that \(\Lambda_i(c\mathbf 1_{A_i})>0\) for every \(i \in \{1,2\}\) and every \(c > 0\) and
\[\Lambda_2(u_1(a_1,\cdot))
+
\Lambda_1(u_2(\cdot,a_2))
\leq0
\qquad
\text{for every }(a_1,a_2)\in A_1\times A_2.
\tag{Z}\label{eq:assessment-condition}
\]

Define \[\mathcal E^{\mathrm{set}}(\Gamma) \coloneqq
\left\{(\Lambda_1,\Lambda_2)\in\mathcal E(\Gamma) \colon (\Lambda_1,\Lambda_2) \text{ has set-nullity for } \Gamma\right\}.\]

\begin{definition}\label{def:zero-sum-equilibrium-condition} A solution correspondence \(S\) on \(\mathcal G\) has the \textit{equilibrium property} if \(S(\Gamma)\subseteq\mathcal E(\Gamma)\) for every bounded two-player zero-sum game \(\Gamma\in\mathcal G\). \end{definition}

Before going on, \eqref{eq:assessment-condition} deserves some explanation. In short, it is the aforementioned coherence requirement for zero-sum games. To wit, suppose, for the moment, that each \(\Lambda_i\) is the expectation functional induced by player \(i\)'s mixed strategy in a Nash equilibrium, and let \(v_i\) denote player \(i\)'s equilibrium payoff. For every \(a_1\in A_1\) and \(a_2\in A_2\), the absence of profitable unilateral deviations implies \(\Lambda_2\left(u_1(a_1,\cdot)\right)\leq v_1\) and \(\Lambda_1\left(u_2(\cdot,a_2)\right)\leq v_2\). And because the game is zero-sum, \(v_1+v_2=0\). Adding the two inequalities, therefore, produces
\[
\Lambda_2\left(u_1(a_1,\cdot)\right)
+
\Lambda_1\left(u_2(\cdot,a_2)\right)
\leq 0,
\]
which is precisely \eqref{eq:assessment-condition}. Thus, \eqref{eq:assessment-condition} retains a necessary consequence of zero-sum equilibrium while allowing the evaluations \(\Lambda_1\) and \(\Lambda_2\) to be more general than expectations under mixed strategies.\footnote{The requirement that positive constant functions receive strictly positive evaluations is a separate nondegeneracy condition. Moreover, \eqref{eq:assessment-condition} can be weakened: it is enough for my purposes to require, for every \((a_1,a_2)\in A_1\times A_2\), that the two quantities \(\Lambda_2\left(u_1(a_1,\cdot)\right)\) and \(\Lambda_1\left(u_2(\cdot,a_2)\right)\) not both be positive, which is (of course) necessary at any Nash equilibrium of a zero-sum game.}

\section{The Problem With Set-Nullity}\label{sec:setwise-obstruction}

I now construct a bounded two-player zero-sum game \(\Gamma^*\) such that \(\mathcal E^{\mathrm{set}}(\Gamma^*) = \emptyset\).

For every \(n \in \N\), set \(r_n\coloneqq n/(2(n+1))\) (and note that \(0<r_1<r_2<\cdots<1/2\)). Let \(C\coloneqq\{\mathrm H,\mathrm T\}\), and give each player the action set \(A_i^\ast\coloneqq C\cup \N\). Define
player \(1\)'s payoff by
\[
u^\ast(a_1,a_2)
\coloneqq
\begin{cases}
1,
\quad &\text{if} \quad a_1=a_2\in C,\\
-1,
\quad &\text{if} \quad a_1,a_2\in C\text{ and }a_1\neq a_2,\\
r_n,
\quad &\text{if} \quad a_1=n\in \N\text{ and }a_2\in C,\\
-r_m,
\quad &\text{if} \quad a_1\in C\text{ and }a_2=m\in \N,\\
r_n-r_m,
\quad &\text{if} \quad a_1=n\in \N\text{ and }a_2=m\in \N.
\end{cases}
\tag{LC}\label{eq:ladder-core-payoff}
\]
Player \(2\)'s payoff is \(u_2^\ast\coloneqq-u^\ast\). Denote the
resulting game by \(\Gamma^\ast\).

\begin{lemma}\label{lem:ladder-core-dominance}
For each player \(i\in\{1,2\}\), \(D_i(\Gamma^\ast) = \N\), and \(A_i^\infty = A_i^k = C\) for all \(k \geq 1\).\end{lemma}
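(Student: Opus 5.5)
By symmetry it suffices to treat player 1. Player 2's payoff is \(-u^\ast\), and swapping the roles of the two players leaves the structure unchanged: for player 2, action \(m\) gets \(r_m\) against \(C\), and \(r_m-r_n\) against \(n\).

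\textbf{Step 1: every \(n\in\N\) is strictly dominated in \(\Gamma^\ast\).} The dominating lottery I will use is the pure action \(n+1\). Against any \(a_2\in C\), the payoff difference is \(r_{n+1}-r_n>0\). Against any \(a_2=m\in\N\), it is \((r_{n+1}-r_m)-(r_n-r_m)=r_{n+1}-r_n>0\). So \(\N\subseteq D_1(\Gamma^\ast)\).

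\textbf{Step 2: neither \(\mathrm H\) nor \(\mathrm T\) is strictly dominated in \(\Gamma^\ast\).} This is the main step. Fix a finite lottery \(q\) on \(A_1^\ast\). Write \(q_H,q_T\) for the weights on \(\mathrm H,\mathrm T\), and \(\rho\coloneqq\sum_n q(n)r_n\). Note that \(\rho<\tfrac12(1-q_H-q_T)\) whenever \(q\) puts positive weight on \(\N\), and \(\rho=0\) otherwise.

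Suppose \(q\) strictly dominates \(\mathrm H\). Comparing payoffs against \(a_2=\mathrm H\) requires \(q_H-q_T+\rho>1\). Since \(q_H+q_T+\sum_n q(n)=1\), this gives \(1\geq q_H+q_T+\sum_n q(n)>q_H+q_T+2\rho> q_H-q_T+\rho>1\) when \(q\) has positive weight on \(\N\), which is a contradiction. If \(q\) has no weight on \(\N\), we would need \(q_H-q_T>1\), which is impossible.

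An even simpler check is available: the column \(a_2=\mathrm H\) already gives \(u^\ast(\mathrm H,\mathrm H)=1\), and every action earns at most \(1\) there (the \(r_n\) are below \(1/2\)). So no lottery can strictly exceed \(\mathrm H\) against \(\mathrm H\). The same argument with column \(\mathrm T\) handles \(\mathrm T\). Hence \(D_1(\Gamma^\ast)=\N\), and likewise \(D_2(\Gamma^\ast)=\N\).

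\textbf{Step 3: the iteration stops after one round.} From Steps 1 and 2, \(A_i^1=C\) for both players. The restricted game \(\Gamma^\ast|_{C\times C}\) is matching pennies, which is nonempty. In matching pennies no action is strictly dominated: \(\mathrm H\) attains the maximal payoff \(1\) against \(\mathrm H\), and \(\mathrm T\) attains \(1\) against \(\mathrm T\), so no lottery can strictly exceed either. For player 2 the analogous columns are \(\mathrm T\) and \(\mathrm H\). Thus \(D_i(\Gamma^\ast|_{A^1})=\emptyset\), so \(A_i^2=A_i^1=C\). By induction on \(k\), \(A_i^k=C\) for all \(k\geq1\), and therefore \(A_i^\infty=C\).

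The only real care needed is Step 2. The trap is that a lottery mixing \(C\) with large \(n\) might seem to dominate. The single-column argument defuses this, because \(1\) is the maximum entry of player 1's payoff against \(\mathrm H\), and \(\mathrm H\) attains it.
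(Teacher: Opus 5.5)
Your proof is correct and follows the paper's argument. The pure action $n+1$ dominates $n$ by the constant gap $r_{n+1}-r_n>0$; each action in $C$ attains the maximal payoff $1$ against a witnessing opponent action, so it cannot be strictly dominated; and the matching-pennies restriction on $C\times C$ has no dominated actions, so the iteration stalls at $C$. Your extra $\rho$-computation in Step 2 is valid but redundant given your single-column observation, which is exactly the paper's reasoning (for player 2 the ``symmetry'' holds only after swapping the labels $\mathrm H,\mathrm T$, as you implicitly do in Step 3).
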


\begin{proof}[Proof of \Cref{lem:ladder-core-dominance}]
For player \(1\), \eqref{eq:ladder-core-payoff} implies
\[
u^\ast(n+1,a_2)-u^\ast(n,a_2)=r_{n+1}-r_n>0
\]
for every \(n\in \N\) and every \(a_2\in A_2^\ast\); i.e., \(n+1\) strictly dominates \(n\). Analogously, for player \(2\) and every \(m \in \N\), \(m+1\) strictly dominates \(m\).

On the other hand, no action in \(C\) is strictly dominated. Player \(1\)'s action \(\mathrm H\) yields payoff \(1\) against \(\mathrm H\), and \(\mathrm T\) yields payoff \(1\) against \(\mathrm T\). Since \(1\)
is the largest payoff available to player \(1\), no finite lottery can be strictly better than either action at its respective witnessing
opponent action. The analogous reasoning works for player \(2\).

Consequently, \(D_i(\Gamma^\ast)=\N\), and the first deletion round leaves exactly \(C\). No action is strictly dominated in the restricted game on \(C\times C\). Consequently, \(A_i^k=C\) for every \(k\geq1\).
\end{proof}

\begin{proposition}\label{prop:ladder-core-obstruction}
We have \(\mathcal E^{\mathrm{set}}(\Gamma^\ast)=\emptyset\).
\end{proposition}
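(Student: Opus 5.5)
The plan is to argue by contradiction: suppose \((\Lambda_1,\Lambda_2)\in\mathcal E^{\mathrm{set}}(\Gamma^\ast)\), and exhibit a single action profile at which \eqref{eq:assessment-condition} fails. By \Cref{lem:ladder-core-dominance}, \(D_i(\Gamma^\ast)=\N\) for each \(i\). So set-nullity says that \(\N\) is \(\Lambda_i\)-null for both players. In words, each \(\Lambda_i(f)\) depends only on the restriction of \(f\) to \(C\).

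The natural profile to test is a pair of dominated actions, \(a_1=n\in\N\) and \(a_2=m\in\N\) (take \(n=m=1\), say).

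First consider player \(1\)'s payoff as a function of player \(2\)'s action. By \eqref{eq:ladder-core-payoff}, \(u^\ast(n,\cdot)\) equals \(r_n\) on \(C\) and \(r_n-r_m\) on \(\N\). It therefore agrees with the constant function \(r_n\mathbf 1_{A_2^\ast}\) off the null set \(\N\). Hence
\[
\Lambda_2\bigl(u^\ast(n,\cdot)\bigr)=\Lambda_2\bigl(r_n\mathbf 1_{A_2^\ast}\bigr).
\]

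Next consider player \(2\)'s payoff as a function of player \(1\)'s action. For \(a_1\in C\) we have \(u_2^\ast(a_1,m)=-(-r_m)=r_m\). So \(u_2^\ast(\cdot,m)\) agrees with \(r_m\mathbf 1_{A_1^\ast}\) off \(\N\), which gives
\[
\Lambda_1\bigl(u_2^\ast(\cdot,m)\bigr)=\Lambda_1\bigl(r_m\mathbf 1_{A_1^\ast}\bigr).
\]

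Since \(r_n,r_m>0\), the nondegeneracy requirement in the definition of \(\mathcal E(\Gamma^\ast)\) makes both terms strictly positive. Their sum is then positive, which contradicts \eqref{eq:assessment-condition}. Indeed, it even contradicts the weaker form in the footnote, since both quantities are positive.

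The one point to verify carefully is the use of nullity. The definition of a \(\Lambda\)-null set only lets us replace \(f\) by \(g\) when the two agree on the complement \(X\setminus B=C\). No linearity or monotonicity of \(\Lambda_i\) is assumed, so the argument must not rely on either. That is why I reduce each evaluation directly to a positive constant function, where the positivity hypothesis applies, rather than trying to decompose the payoff functions. Beyond this, I do not expect any real obstacle: the construction is designed so that, once \(\N\) is invisible, every dominated action looks like a guaranteed positive payoff to both players at once, which is impossible in a zero-sum game.
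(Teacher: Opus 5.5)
Your proposal is correct and follows the paper's proof essentially line for line. It assumes a pair in \(\mathcal E^{\mathrm{set}}(\Gamma^\ast)\), uses \Cref{lem:ladder-core-dominance} to make \(\N\) null for both players, replaces \(u^\ast(1,\cdot)\) and \(u_2^\ast(\cdot,1)\) by the constants \(r_1\mathbf 1_{A_2^\ast}\) and \(r_1\mathbf 1_{A_1^\ast}\) (which agree with them on \(C\)), and invokes positivity of constants to contradict \eqref{eq:assessment-condition} at \((1,1)\), relying only on the definition of nullity rather than on linearity.
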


\begin{proof}[Proof of \Cref{prop:ladder-core-obstruction}]
Suppose for the sake of contradiction that \((\Lambda_1,\Lambda_2) \in \mathcal E^{\mathrm{set}}(\Gamma^\ast)\). By \Cref{lem:ladder-core-dominance}, \(\N\) is
\(\Lambda_i\)-null for both \(i\in\{1,2\}\).

The functions \(u^\ast(1,\cdot)\) and
\(r_1\mathbf 1_{A_2^\ast}\) agree on
\(C=A_2^\ast\setminus \N\). Since \(\N\) is \(\Lambda_2\)-null,
\[
\Lambda_2(u^\ast(1,\cdot))
=
\Lambda_2(r_1\mathbf 1_{A_2^\ast})
>
0.
\]

Likewise, the functions \(u_2^\ast(\cdot,1)\) and
\(r_1\mathbf 1_{A_1^\ast}\) agree on
\(C=A_1^\ast\setminus \N\). Since \(\N\) is \(\Lambda_1\)-null,
\[
\Lambda_1(u_2^\ast(\cdot,1))
=
\Lambda_1(r_1\mathbf 1_{A_1^\ast})
>
0.
\]

Consequently,
\[
\Lambda_2(u^\ast(1,\cdot))
+
\Lambda_1(u_2^\ast(\cdot,1))
>
0,
\]
contradicting \eqref{eq:assessment-condition} with
\((a_1,a_2)=(1,1)\). Therefore,
\(\mathcal E^{\mathrm{set}}(\Gamma^\ast)=\emptyset\).
\end{proof}

\begin{theorem}\label{thm:universal-setwise-impossibility} Let \(\mathcal G\) be any class of games, each satisfying \Cref{ass:bounded-game}, that contains \(\Gamma^\ast\). No solution correspondence on \(\mathcal G\) simultaneously i. has the equilibrium property, ii. is universally nonempty, and iii. satisfies set-nullity.
\end{theorem}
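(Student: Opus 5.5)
The plan is a short contradiction argument that reduces everything to \Cref{prop:ladder-core-obstruction}. Suppose, for contradiction, that a solution correspondence \(S\) on \(\mathcal G\) satisfies all three properties i.--iii. Since \(\Gamma^\ast\in\mathcal G\), universal nonemptiness (ii.) gives some tuple \((\Lambda_1,\Lambda_2)\in S(\Gamma^\ast)\). The goal is to show this tuple lies in \(\mathcal E^{\mathrm{set}}(\Gamma^\ast)\), which \Cref{prop:ladder-core-obstruction} says is empty.

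First, I check that \(\Gamma^\ast\) falls under the equilibrium property. It is a two-player game with \(u_2^\ast=-u^\ast\), so it is zero-sum. It is also bounded, because every payoff in \eqref{eq:ladder-core-payoff} lies in \([-1,1]\), using \(0<r_n<1/2\). Hence the equilibrium property (i.) yields \(S(\Gamma^\ast)\subseteq\mathcal E(\Gamma^\ast)\), so \((\Lambda_1,\Lambda_2)\in\mathcal E(\Gamma^\ast)\).

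Second, set-nullity (iii.) says that every element of \(S(\Gamma^\ast)\) has set-nullity for \(\Gamma^\ast\). Combined with the first step, this gives \((\Lambda_1,\Lambda_2)\in\mathcal E^{\mathrm{set}}(\Gamma^\ast)\), contradicting \Cref{prop:ladder-core-obstruction}.

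There is no real obstacle. The only point needing care is confirming that \(\Gamma^\ast\) meets the hypotheses of the equilibrium property: it must be a bounded two-player zero-sum game and satisfy \Cref{ass:bounded-game}. Both are immediate from the construction.
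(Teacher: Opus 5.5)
Your proposal is correct and follows the paper's proof: both combine the equilibrium property and set-nullity to get \(S(\Gamma^\ast)\subseteq\mathcal E^{\mathrm{set}}(\Gamma^\ast)=\emptyset\) by \Cref{prop:ladder-core-obstruction}, which contradicts universal nonemptiness. Your explicit check that \(\Gamma^\ast\) is a bounded two-player zero-sum game, with payoffs in \([-1,1]\), is a detail the paper leaves implicit.
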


\begin{proof}[Proof of \Cref{thm:universal-setwise-impossibility}]
Suppose \(S\) has the equilibrium property, so that \(S(\Gamma^\ast)\subseteq\mathcal E(\Gamma^\ast)\); and satisfies set-nullity, so that every pair in \(S(\Gamma^\ast)\) belongs to \(\mathcal E^{\mathrm{set}}(\Gamma^\ast)\). Consequently, \(S(\Gamma^\ast) \subseteq \mathcal E^{\mathrm{set}}(\Gamma^\ast) = \emptyset\), by \Cref{prop:ladder-core-obstruction}, and \(S\) is not universally nonempty. \end{proof}

\section{Point-Nullity as a Panacea}
\label{sec:finite-approximations}

The previous section reveals that if we want universal equilibrium existence, we cannot always require the entire set of strictly dominated actions to be null. I now show that a weaker requirement is possible: each strictly dominated action can be null on its own.

To obtain this result, I consider Nash equilibria of finite restrictions of games and take limits as those restrictions include more and more actions. The resulting evaluations satisfy point-nullity. 

Continue to maintain \Cref{ass:bounded-game}.

\begin{definition}\label{def:finitely-additive-probability}
A \textit{finitely additive probability} on a nonempty set \(X\) is a linear map \(\Lambda\colon\ell^\infty(X)\to\mathbb R\) such that \(\Lambda(\mathbf 1_X)=1\) and \[f\geq0 \quad\Longrightarrow\quad \Lambda(f)\geq0.\]

Let \(\mathcal P^{\mathrm{fa}}(X)\) denote the set of 
finitely additive probabilities on \(X\), endowed with the weak-* topology \(\sigma(\ell^\infty(X)^*,\ell^\infty(X))\).
\end{definition}

Note the standard fact that for \(X \neq \emptyset\), \(\mathcal P^{\mathrm{fa}}(X)\) is weak-* compact.\footnote{Every \(\Lambda\in\mathcal P^{\mathrm{fa}}(X)\) satisfies \(|\Lambda(f)|\leq\lVert f\rVert_\infty\) for every \(f\in\ell^\infty(X)\) and, thus, belongs to \(\ell^\infty(X)^*\) with operator norm one. Then appeal to the Banach-Alaoglu theorem to conclude that \(\mathcal P^{\mathrm{fa}}(X)\) is weak-* compact.} For \(\widehat\Lambda\in\mathcal P^{\mathrm{fa}}(A)\), define its
\(i\)-th marginal
\(\widehat\Lambda_i\in\mathcal P^{\mathrm{fa}}(A_i)\) by
\[
\widehat\Lambda_i(f)\coloneqq\widehat\Lambda(a\mapsto f(a_i))
\qquad
\text{for every }f\in\ell^\infty(A_i).
\]
Importantly, weak-* convergence passes to marginals: whenever \(\widehat\Lambda^\alpha\to\widehat\Lambda\) weak-*, \(\widehat\Lambda_i^\alpha\to\widehat\Lambda_i\) weak-* for every \(i\in I\).

\begin{lemma}\label{lem:linear-nullity}
Let \(\Lambda\in\mathcal P^{\mathrm{fa}}(X)\). A set \(B\subseteq X\)
is \(\Lambda\)-null if and only if
\(\Lambda(\mathbf 1_B)=0\).
\end{lemma}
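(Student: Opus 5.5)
The proof splits into the two directions. Throughout, I use three properties of \(\Lambda\in\mathcal P^{\mathrm{fa}}(X)\): linearity, positivity, and the bound \(|\Lambda(h)|\leq\lVert h\rVert_\infty\) recorded in the footnote following \Cref{def:finitely-additive-probability}.

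For necessity, suppose \(B\) is \(\Lambda\)-null. The functions \(\mathbf 1_B\) and the zero function \(0\) agree on \(X\setminus B\), so nullity gives \(\Lambda(\mathbf 1_B)=\Lambda(0)\). Linearity gives \(\Lambda(0)=0\). Hence \(\Lambda(\mathbf 1_B)=0\).

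For sufficiency, suppose \(\Lambda(\mathbf 1_B)=0\), and take \(f,g\in\ell^\infty(X)\) with \(f=g\) on \(X\setminus B\). By linearity, it suffices to show \(\Lambda(h)=0\) for \(h\coloneqq f-g\). This \(h\) is bounded and vanishes off \(B\). Set \(M\coloneqq\lVert h\rVert_\infty\). Then, pointwise,
\[
-M\mathbf 1_B\leq h\leq M\mathbf 1_B .
\]
Positivity makes \(\Lambda\) monotone, so applying \(\Lambda\) and using linearity gives
\[
-M\Lambda(\mathbf 1_B)\leq\Lambda(h)\leq M\Lambda(\mathbf 1_B)=0,
\]
and the left-hand side is also \(0\). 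Therefore \(\Lambda(h)=0\), so \(\Lambda(f)=\Lambda(g)\), and \(B\) is \(\Lambda\)-null.

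There is no real obstacle. The only step needing care is the sandwich \(|h|\leq\lVert h\rVert_\infty\mathbf 1_B\). It relies on \(h\) vanishing off \(B\), and it is exactly the point where positivity is used, not just linearity. Note also that the argument never needs \(\Lambda(\mathbf 1_X)=1\). The lemma therefore holds for any positive linear functional on \(\ell^\infty(X)\).
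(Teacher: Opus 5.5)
Your proof is correct and follows the paper's argument essentially verbatim: necessity by comparing \(\mathbf 1_B\) with the zero function, and sufficiency via the sandwich \(-\lVert h\rVert_\infty\mathbf 1_B\leq h\leq\lVert h\rVert_\infty\mathbf 1_B\) combined with positivity and linearity. The norm bound \(|\Lambda(h)|\leq\lVert h\rVert_\infty\) you list at the outset is never actually used, and your remark that the normalization \(\Lambda(\mathbf 1_X)=1\) is unnecessary is correct.
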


\begin{proof}[Proof of \Cref{lem:linear-nullity}]
If \(B\) is \(\Lambda\)-null, then \(\mathbf 1_B\) and the zero
function agree on \(X\setminus B\), so
\(\Lambda(\mathbf 1_B)=0\).

Conversely, suppose \(\Lambda(\mathbf 1_B)=0\), and let
\(f,g\in\ell^\infty(X)\) agree on \(X\setminus B\). Set
\(h\coloneqq f-g\). Since \(h\) vanishes on \(X\setminus B\),
\[
-\lVert h\rVert_\infty\mathbf 1_B\leq h\leq\lVert h\rVert_\infty\mathbf 1_B.
\]
Since \(\Lambda\) is positive and linear, applying \(\Lambda\) to these inequalities produces
\[
-\lVert h\rVert_\infty\Lambda(\mathbf 1_B)\leq\Lambda(h)\leq\lVert h\rVert_\infty\Lambda(\mathbf 1_B).
\]
Hence, \(\Lambda(h)=0\), so \(\Lambda(f)=\Lambda(g)\). Thus, \(B\) is
\(\Lambda\)-null.
\end{proof}
A \textit{directed set} is a nonempty set \(D\) with a reflexive and
transitive relation \(\succeq\) such that, for every
\(\alpha,\beta\in D\), there is \(\gamma\in D\) satisfying
\(\gamma\succeq\alpha\) and \(\gamma\succeq\beta\). A \textit{net}
is a family \((x^\alpha)_{\alpha\in D}\) indexed by a directed set. A
property holds \textit{eventually} if there is \(\alpha_0\in D\) such
that it holds for every \(\alpha\succeq\alpha_0\). A \textit{subnet} of \((x^\alpha)_{\alpha\in D}\) is a net
\((x^{\phi(\beta)})_{\beta\in E}\), where \(E\) is directed and
\(\phi\colon E\to D\) satisfies both of the following conditions:
\(\beta'\succeq\beta\) implies
\(\phi(\beta')\succeq\phi(\beta)\), and, for every
\(\alpha_0\in D\), \(\phi(\beta)\succeq\alpha_0\) eventually. A net \((\Lambda^\alpha)_{\alpha\in D}\) in
\(\mathcal P^{\mathrm{fa}}(X)\) converges weak-* to
\(\Lambda\in\mathcal P^{\mathrm{fa}}(X)\) if \(\Lambda^\alpha(f)\longrightarrow\Lambda(f)\) for every \(f\in\ell^\infty(X)\).

\begin{definition}\label{def:exhaustive-finite-restrictions} A net \((F^\alpha)_{\alpha\in D}\), where \(F^\alpha\coloneqq\times_{i\in I}F_i^\alpha\) and each \(F_i^\alpha\subseteq A_i\) is finite and nonempty, is \textit{exhaustive} for \(\Gamma\) if, for every tuple \((G_i)_{i\in I}\) of finite sets \(G_i\subseteq A_i\), there is \(\alpha_0\in D\) such that \(G_i\subseteq F_i^\alpha\) for every \(i \in I\) and every \(\alpha\succeq\alpha_0\). \end{definition}

\citet[Definitions~3.3-3.4 and p.~25]{KhanPedersenStinchcombe2026} define \(\operatorname{Eq}^{\mathrm{Fin}}(\Gamma)\) as the set of limits of equilibria of finite restrictions along exhaustive nets. Their convergence is the weak-* convergence used here \citep[\S3.2.2]{KhanPedersenStinchcombe2026}.

\begin{definition}\label{def:finitely-approximable-equilibrium}
A finitely additive probability \(\widehat\Lambda\in\mathcal P^{\mathrm{fa}}(A)\) is a \textit{finitely approximable equilibrium} for \(\Gamma\) if there are an exhaustive net \((F^\alpha)_{\alpha\in D}\) and, for every \(\alpha\in D\), a mixed Nash equilibrium \(\sigma^\alpha=(\sigma_i^\alpha)_{i\in I}\) of \(\Gamma|_{F^\alpha}\) such that the finitely additive probabilities
\[
\widehat\Lambda^\alpha(h)\coloneqq\sum_{a\in F^\alpha}
\left(\prod_{i\in I}\sigma_i^\alpha(a_i)
\right)h(a)
\qquad
\text{for every }h\in\ell^\infty(A),
\]
converge weak-* to \(\widehat\Lambda\).\footnote{Recall that for nonempty subsets \(B_i\subseteq A_i\), \(\Gamma|_B\) denotes the restriction of \(\Gamma\) to \(B\coloneqq\times_{i\in I}B_i\)} Let
\(\operatorname{Eq}^{\mathrm{Fin}}(\Gamma)\) denote the set of
finitely approximable equilibria for \(\Gamma\).
\end{definition}
Informally, a finitely approximable equilibrium is a weak-* limit of ordinary mixed Nash equilibria of finite restrictions that eventually contain every prescribed finite collection of actions.

Because \(I\) is finite, \Cref{ass:bounded-game} proffers a \(B>0\) such that \(\lVert u_i\rVert_\infty\leq B\) for every \(i\in I\). \citet[Theorem C]{KhanPedersenStinchcombe2026} states that \(\operatorname{Eq}^{\mathrm{Fin}}\) is nonempty-valued; i.e., \(\operatorname{Eq}^{\mathrm{Fin}}(\Gamma)\neq\emptyset\).

\begin{lemma}\label{lem:finitely-approximable-properties} If \(\widehat\Lambda\in\operatorname{Eq}^{\mathrm{Fin}}(\Gamma)\), then \(\widehat\Lambda(u_i)\geq\widehat\Lambda(a\mapsto u_i(b_i,a_{-i}))\) for every \(i \in I\) and every \(b_i\in A_i\). Moreover, the tuple \((\widehat\Lambda_i)_{i\in I}\) has point-nullity for \(\Gamma\).\end{lemma}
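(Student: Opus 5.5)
Fix \(\widehat\Lambda\in\operatorname{Eq}^{\mathrm{Fin}}(\Gamma)\), together with an exhaustive net \((F^\alpha)_{\alpha\in D}\) and mixed equilibria \(\sigma^\alpha\) of \(\Gamma|_{F^\alpha}\) whose induced product measures \(\widehat\Lambda^\alpha\) converge weak-* to \(\widehat\Lambda\). The plan is to establish both claims first at each finite stage and then pass to the limit. Weak-* convergence gives convergence for each fixed bounded test function, and exhaustiveness makes the relevant finitely many actions available from some index on.

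\textbf{Deviation inequality.} Fix \(i\) and \(b_i\in A_i\). By exhaustiveness (take \(G_i=\{b_i\}\) and \(G_j\) an arbitrary singleton for \(j\neq i\)), there is \(\alpha_0\) with \(b_i\in F_i^\alpha\) for all \(\alpha\succeq\alpha_0\). For such \(\alpha\), the Nash property of \(\sigma^\alpha\) in \(\Gamma|_{F^\alpha}\) gives
\[
\widehat\Lambda^\alpha(u_i)=\sum_{a\in F^\alpha}\prod_j\sigma^\alpha_j(a_j)\,u_i(a)\;\geq\;\sum_{a_{-i}\in F^\alpha_{-i}}\prod_{j\neq i}\sigma^\alpha_j(a_j)\,u_i(b_i,a_{-i})=\widehat\Lambda^\alpha\bigl(a\mapsto u_i(b_i,a_{-i})\bigr).
\]
The last equality holds because \(\sigma_i^\alpha\) sums to one. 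Both sides involve fixed functions in \(\ell^\infty(A)\), since \(u_i\) is bounded by \(B\). Weak inequalities survive limits of real nets, so the eventual inequality passes to \(\widehat\Lambda\).

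\textbf{Point-nullity.} Let \(a_i\in D_i(\Gamma)\), dominated by a finite lottery \(q_i\). By \Cref{lem:linear-nullity} it suffices to show \(\widehat\Lambda_i(\mathbf 1_{\{a_i\}})=0\). By definition of the marginal, this equals \(\lim_\alpha\sigma_i^\alpha(a_i)\), where \(\sigma_i^\alpha(a_i)\coloneqq 0\) if \(a_i\notin F_i^\alpha\). By exhaustiveness, eventually \(\{a_i\}\cup\operatorname{supp}(q_i)\subseteq F_i^\alpha\). In the finite game \(\Gamma|_{F^\alpha}\), the action \(a_i\) is then strictly dominated by the available lottery \(q_i\) against every \(a_{-i}\in F^\alpha_{-i}\). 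A strictly dominated pure action receives zero weight in any mixed Nash equilibrium: shifting its mass onto \(q_i\) strictly raises \(i\)'s expected payoff, because the expected gain is a weighted sum of strictly positive terms. Hence \(\sigma_i^\alpha(a_i)=0\) eventually, so the limit is \(0\) and \(\{a_i\}\) is \(\widehat\Lambda_i\)-null.

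\textbf{Main obstacle.} The only delicate point is the bookkeeping for actions outside \(F^\alpha\). The remedy is to regard each \(\widehat\Lambda^\alpha\) as a measure on all of \(A\) that assigns zero mass outside \(F^\alpha\), and to verify that the strict-dominance inequality restricted to the finite opponent set \(F_{-i}^\alpha\) still yields a strictly profitable deviation. No uniform margin is needed, since each finite game is handled separately and only the eventual value \(\sigma_i^\alpha(a_i)=0\) matters. Set-nullity is not claimed, and indeed cannot hold in general by \Cref{prop:ladder-core-obstruction}. The reason is visible here: the finitely many actions dominated at each stage may change with \(\alpha\), so their total mass need not vanish in the limit.
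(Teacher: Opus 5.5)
Your proposal is correct and follows essentially the same route as the paper's proof. Exhaustiveness puts \(b_i\) (resp.\ \(\{a_i\}\cup\operatorname{supp}(q_i)\)) into \(F_i^\alpha\) eventually; the finite-stage Nash condition then gives the deviation inequality (resp.\ \(\sigma_i^\alpha(a_i)=0\)); and weak-* convergence together with \Cref{lem:linear-nullity} carries both to the limit. Your mass-shifting argument and the explicit use of \(\sum\sigma_i^\alpha=1\) simply spell out the paper's ``\(a_i\) is not a best response'' step in more detail.
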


\begin{proof}[Proof of \Cref{lem:finitely-approximable-properties}]
Because \(\widehat\Lambda\in\operatorname{Eq}^{\mathrm{Fin}}(\Gamma)\), \Cref{def:finitely-approximable-equilibrium} provides a directed set \(D\), an exhaustive net \((F^\alpha)_{\alpha\in D}\), and, for every \(\alpha\in D\), a mixed Nash equilibrium \(\sigma^\alpha=(\sigma_i^\alpha)_{i\in I}\) of
\(\Gamma|_{F^\alpha}\), with \(\widehat\Lambda^\alpha\to\widehat\Lambda\) weak-*. Let
\(\widehat\Lambda_i^\alpha\) denote the \(i\)-th marginal of
\(\widehat\Lambda^\alpha\).

Fix \(i\in I\) and \(b_i\in A_i\). As \((F^\alpha)_{\alpha\in D}\) is an exhaustive net, \(b_i\in F_i^\alpha\) eventually. For every such \(\alpha\), the Nash equilibrium condition in \(\Gamma|_{F^\alpha}\) implies \(\widehat\Lambda^\alpha(u_i)\geq\widehat\Lambda^\alpha(a\mapsto u_i(b_i,a_{-i}))\), and weak-* convergence of \(\widehat\Lambda^\alpha\) to \(\widehat \Lambda\) yields
\(\widehat\Lambda(u_i)\geq\widehat\Lambda(a\mapsto u_i(b_i,a_{-i}))\).

Now fix \(a_i\in D_i(\Gamma)\). By
\Cref{def:finite-strict-dominance}, there is
\(q_i\in\Delta_f(A_i)\) that strictly dominates \(a_i\).
By exhaustiveness of \((F^\alpha)_{\alpha\in D}\), \(\{a_i\}\cup\operatorname{supp}(q_i)\subseteq F_i^\alpha\) eventually.

For every such \(\alpha\), the lottery \(q_i\) gives player \(i\) a strictly higher expected payoff than \(a_i\) against the other players' equilibrium lotteries. Hence, \(a_i\) is not a best response, and so \(\sigma_i^\alpha(a_i)=0\). Therefore, eventually, \(\widehat\Lambda_i^\alpha(\mathbf 1_{\{a_i\}})=\sigma_i^\alpha(a_i)=0\). Consequently,
\(\widehat\Lambda_i(\mathbf 1_{\{a_i\}})=0\), since
\(\widehat\Lambda_i^\alpha\to\widehat\Lambda_i\) weak-*.

By \Cref{lem:linear-nullity}, the singleton \(\{a_i\}\) is
\(\widehat\Lambda_i\)-null. Since \(i\) and \(a_i\) were arbitrary,
the tuple \((\widehat\Lambda_i)_{i\in I}\) has point-nullity.
\end{proof}

Let \(\mathcal G_{\mathrm{bd}}\) denote the class of all games
satisfying \Cref{ass:bounded-game}. \Cref{lem:finitely-approximable-properties} supplies the two properties needed for a positive result: the limiting joint assessment satisfies the pure-strategy-deviation inequalities, and its marginals make every strictly dominated action null. I, therefore, define the solution correspondence by retaining the marginals of finitely approximable equilibria.

\begin{definition}\label{def:point-null-solution-correspondence}For \(\Gamma\in\mathcal G_{\mathrm{bd}}\), define \(S^{\mathrm{pt}}(\Gamma)\coloneqq\left\{(\widehat\Lambda_i)_{i\in I} \colon \widehat\Lambda\in\operatorname{Eq}^{\mathrm{Fin}}(\Gamma) \right\}\).\end{definition}

\begin{theorem}\label{thm:universal-point-nullity}
On \(\mathcal G_{\mathrm{bd}}\), the correspondence
\(S^{\mathrm{pt}}\) is universally nonempty, satisfies point-nullity,
and has the equilibrium property.
\end{theorem}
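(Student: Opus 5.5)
Universal nonemptiness and point-nullity follow directly from earlier results; the equilibrium property is the only part needing an argument. \emph{Nonemptiness}: for any \(\Gamma\in\mathcal G_{\mathrm{bd}}\), \citet[Theorem C]{KhanPedersenStinchcombe2026} gives some \(\widehat\Lambda\in\operatorname{Eq}^{\mathrm{Fin}}(\Gamma)\), so \((\widehat\Lambda_i)_{i\in I}\in S^{\mathrm{pt}}(\Gamma)\). \emph{Point-nullity}: every element of \(S^{\mathrm{pt}}(\Gamma)\) is a tuple of marginals of some \(\widehat\Lambda\in\operatorname{Eq}^{\mathrm{Fin}}(\Gamma)\), and the second part of \Cref{lem:finitely-approximable-properties} says exactly that such a tuple has point-nullity.

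\emph{Equilibrium property}: fix a bounded two-player zero-sum \(\Gamma\in\mathcal G_{\mathrm{bd}}\) with \(u_2=-u_1\), and \(\widehat\Lambda\in\operatorname{Eq}^{\mathrm{Fin}}(\Gamma)\). Positivity of constants is immediate, since each marginal \(\widehat\Lambda_i\) is a finitely additive probability: \(\widehat\Lambda_i(c\mathbf 1_{A_i})=c>0\).

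For \eqref{eq:assessment-condition}, the key point is that the marginals evaluate the deviation payoffs correctly. For fixed \(a_1\), the function \(a\mapsto u_1(a_1,a_2)\) on \(A\) depends only on \(a_2\). Hence, by the definition of the marginal,
\[\widehat\Lambda(a\mapsto u_1(a_1,a_2))=\widehat\Lambda_2(u_1(a_1,\cdot)).\]
Similarly,
\[\widehat\Lambda(a\mapsto u_2(a_1',a_2))=\widehat\Lambda_1(u_2(\cdot,a_2)).\]
In particular, no product structure of \(\widehat\Lambda\) is needed, which avoids the main potential worry that the weak-* limit of product measures need not be a product.

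The first part of \Cref{lem:finitely-approximable-properties} then gives, for every \((a_1,a_2)\in A_1\times A_2\),
\[\widehat\Lambda_2(u_1(a_1,\cdot))\leq\widehat\Lambda(u_1)\qquad\text{and}\qquad\widehat\Lambda_1(u_2(\cdot,a_2))\leq\widehat\Lambda(u_2).\]
Adding these and using linearity of \(\widehat\Lambda\) together with \(u_1+u_2=0\) yields
\[\widehat\Lambda(u_1)+\widehat\Lambda(u_2)=\widehat\Lambda(u_1+u_2)=\widehat\Lambda(0)=0.\]
This is \eqref{eq:assessment-condition}, so \((\widehat\Lambda_1,\widehat\Lambda_2)\in\mathcal E(\Gamma)\).

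The only step requiring care is the identification of the deviation payoffs with the marginal evaluations. It is routine once one observes that the relevant functions factor through the projection onto a single player's coordinate, so I expect no real obstacle.
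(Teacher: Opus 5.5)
Your proof is correct and follows the paper's argument essentially step for step. You get nonemptiness from Theorem~C of Khan, Pedersen, and Stinchcombe, point-nullity from the second part of the lemma on finitely approximable equilibria, and the equilibrium property by adding the two deviation inequalities and using $\widehat\Lambda(u_1+u_2)=0$. Your explicit identification $\widehat\Lambda(a\mapsto u_1(b_1,a_2))=\widehat\Lambda_2(u_1(b_1,\cdot))$ is exactly what the paper uses implicitly. Naming the fixed actions $b_1,b_2$, as the paper does, would avoid the clash in your notation between the fixed action and the integration variable in $a\mapsto u_1(a_1,a_2)$.
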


\begin{proof}[Proof of \Cref{thm:universal-point-nullity}]
Universal nonemptiness follows from \(\operatorname{Eq}^{\mathrm{Fin}}\) being nonempty-valued \citep[Theorem~C]{KhanPedersenStinchcombe2026}.\footnote{Alternatively, here is a self-contained proof. Let \(D\) be the set of all \(F=\times_{i\in I}F_i\), where each \(F_i\subseteq A_i\) is finite and nonempty, and define \(F'\succeq F\) if \(F_i'\supseteq F_i\) for every \(i\in I\). The set \(D\) is directed, and the net \((F)_{F\in D}\) is exhaustive. Since each \(\Gamma|_F\) is finite, choose a mixed Nash equilibrium \(\sigma^F\) of \(\Gamma|_F\), and define \(\widehat\Lambda^F\) from \(\sigma^F\) as in \Cref{def:finitely-approximable-equilibrium}. Weak-* compactness of \(\mathcal P^{\mathrm{fa}}(A)\) means that the net \((\widehat\Lambda^F)_{F\in D}\) has a convergent subnet. By the definition of a subnet, the corresponding subnet of finite restrictions remains exhaustive. Its limit, therefore, belongs to \(\operatorname{Eq}^{\mathrm{Fin}}(\Gamma)\), so \(\operatorname{Eq}^{\mathrm{Fin}}(\Gamma)\neq\emptyset\).} Point-nullity follows from \Cref{lem:finitely-approximable-properties}.

It remains to verify the equilibrium property. Let \(\Gamma\) be a
bounded two-player zero-sum game with player set \(\{1,2\}\), action sets \(A_1,A_2\), and payoffs \(u_2=-u_1\). Take \((\widehat\Lambda_1,\widehat\Lambda_2) \in S^{\mathrm{pt}}(\Gamma)\). By \Cref{def:point-null-solution-correspondence}, there is \(\widehat\Lambda\in\operatorname{Eq}^{\mathrm{Fin}}(\Gamma)\) whose first and second marginals are \(\widehat\Lambda_1\) and \(\widehat\Lambda_2\), respectively. For \(i\in\{1,2\}\) and \(c>0\), as \(\widehat \Lambda_i\) is a finitely-additive probability, \(\widehat\Lambda_i(c\mathbf 1_{A_i})=c>0\).

Finally, fix \((b_1,b_2)\in A_1\times A_2\). By \Cref{lem:finitely-approximable-properties},
\[\widehat\Lambda_2(u_1(b_1,\cdot)) \leq \widehat\Lambda(u_1) \qquad \text{and} \qquad \widehat\Lambda_1(u_2(\cdot,b_2)) \leq\widehat\Lambda(u_2).\]
Consequently,
\[\widehat\Lambda_2(u_1(b_1,\cdot))+\widehat\Lambda_1(u_2(\cdot,b_2)) \leq \widehat\Lambda(u_1)+\widehat\Lambda(u_2) = \widehat\Lambda(u_1+u_2) = 0.\]
Thus, \((\widehat\Lambda_1,\widehat\Lambda_2)\in\mathcal E(\Gamma)\), and so \(S^{\mathrm{pt}}\) has the equilibrium property.
\end{proof}

\Cref{thm:universal-point-nullity} guarantees point-nullity, but leaves open whether finite approximability might also imply set-nullity. I now show that it does not.

\begin{corollary}\label{cor:ladder-tail-limit}
There exists \(\widehat\Lambda^\ast \in\operatorname{Eq}^{\mathrm{Fin}}(\Gamma^\ast)\) such that, for every \(i\in\{1,2\}\), \(\widehat\Lambda_i^\ast(\mathbf 1_{\{n\}})=0\) for every \(n\in\N\), while \(\widehat\Lambda_i^\ast(\mathbf 1_\N)=1\). Consequently, \((\widehat\Lambda_1^\ast,\widehat\Lambda_2^\ast) \in S^{\mathrm{pt}}(\Gamma^\ast)\) has point-nullity but not set-nullity.\end{corollary}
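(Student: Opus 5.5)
The plan is to exhibit explicit finite restrictions of \(\Gamma^\ast\) whose Nash equilibria are pure and sit on ever larger integers, and then pass to a weak-* limit point. For \(k\in\N\), let \(F_i^k\coloneqq C\cup\{1,\dots,k\}\) for \(i\in\{1,2\}\), and set \(F^k\coloneqq F_1^k\times F_2^k\). Indexed by \(\N\) with its usual order, this sequence is an exhaustive net: every finite \(G_i\subseteq A_i^\ast\) lies in \(C\cup\{1,\dots,k_0\}\) for some \(k_0\), and hence in every \(F_i^k\) with \(k\geq k_0\).

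The first step is to check that \((k,k)\) is a pure Nash equilibrium of \(\Gamma^\ast|_{F^k}\). By \eqref{eq:ladder-core-payoff}, \(u^\ast(k,k)=r_k-r_k=0\), so both players receive \(0\).
\begin{itemize}[noitemsep]
    \item Player \(1\) deviating to \(n<k\) gets \(r_n-r_k<0\), and deviating to \(\mathrm H\) or \(\mathrm T\) gets \(-r_k<0\).
    \item Player \(2\) deviating to \(m<k\) gets \(-(r_k-r_m)<0\), and deviating to \(c\in C\) gets \(-u^\ast(k,c)=-r_k<0\).
\end{itemize}
So no profitable deviation exists. Let \(\sigma^k\coloneqq(\delta_k,\delta_k)\). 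The associated \(\widehat\Lambda^k\) of \Cref{def:finitely-approximable-equilibrium} is then evaluation at \((k,k)\), and its marginals are \(\widehat\Lambda_i^k(f)=f(k)\).

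The second step is to extract a limit. By weak-* compactness of \(\mathcal P^{\mathrm{fa}}(A^\ast)\), the net \((\widehat\Lambda^k)_{k\in\N}\) has a subnet \((\widehat\Lambda^{\phi(\beta)})_{\beta\in E}\) converging weak-* to some \(\widehat\Lambda^\ast\). In general this subnet is not a subsequence, which is why nets are needed. By the definition of a subnet, \(\phi(\beta)\succeq k_0\) eventually for every \(k_0\). Hence \((F^{\phi(\beta)})_{\beta\in E}\) is still exhaustive, and each \(\sigma^{\phi(\beta)}\) is a Nash equilibrium of the corresponding restriction. Therefore \(\widehat\Lambda^\ast\in\operatorname{Eq}^{\mathrm{Fin}}(\Gamma^\ast)\), and its marginals are the weak-* limits of the marginals along the subnet.

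The third step evaluates the marginals.
\begin{itemize}[noitemsep]
    \item For every index, \(\widehat\Lambda_i^{\phi(\beta)}(\mathbf 1_\N)=\mathbf 1_\N(\phi(\beta))=1\), so \(\widehat\Lambda_i^\ast(\mathbf 1_\N)=1\).
    \item For fixed \(n\in\N\), \(\phi(\beta)\succeq n+1\) eventually, and for such \(\beta\) we have \(\widehat\Lambda_i^{\phi(\beta)}(\mathbf 1_{\{n\}})=0\). Hence \(\widehat\Lambda_i^\ast(\mathbf 1_{\{n\}})=0\).
\end{itemize}

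Finally, \((\widehat\Lambda_1^\ast,\widehat\Lambda_2^\ast)\in S^{\mathrm{pt}}(\Gamma^\ast)\) by \Cref{def:point-null-solution-correspondence}. By \Cref{lem:ladder-core-dominance}, \(D_i(\Gamma^\ast)=\N\). \Cref{lem:linear-nullity} then gives that each singleton \(\{n\}\) is \(\widehat\Lambda_i^\ast\)-null, which is point-nullity (this also follows directly from \Cref{lem:finitely-approximable-properties}). The same lemma gives that \(\N\) is not \(\widehat\Lambda_i^\ast\)-null, since \(\widehat\Lambda_i^\ast(\mathbf 1_\N)=1\neq0\); so set-nullity fails. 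This also agrees with \Cref{prop:ladder-core-obstruction}.

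The only delicate point is the subnet bookkeeping in the second step: confirming that exhaustiveness survives passage to the subnet and that eventually-constant quantities pass to the limit. Both follow directly from the cofinality condition in the definition of a subnet.
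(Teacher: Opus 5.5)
Your proposal is correct and matches the paper's proof step for step. You use the same restrictions $C\cup\{1,\dots,k\}$, the same pure equilibrium $(k,k)$, the same compactness-and-subnet extraction with exhaustiveness preserved by cofinality, and the same evaluation of the marginals on $\mathbf 1_{\{n\}}$ and $\mathbf 1_{\N}$, combined with the nullity lemma for finitely additive probabilities.
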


\begin{proof}[Proof of \Cref{cor:ladder-tail-limit}]
For \(k\in\N\), set \(F_i^k\coloneqq C\cup\{1,\ldots,k\}\) for \(i\in\{1,2\}\), and let \(F^k\coloneqq F_1^k\times F_2^k\). Every finite subset of \(A_i^\ast\) is contained in \(F_i^k\) for all sufficiently large \(k\), so \((F^k)_{k\in\N}\) is exhaustive for \(\Gamma^\ast\).

For \(n,m\in\{1,\ldots,k\}\) and \(c\in C\),
\[u^\ast(n,k) = r_n-r_k\leq0=u^\ast(k,k), \qquad u^\ast(c,k) =-r_k<0=u^\ast(k,k),\]
\[u_2^\ast(k,m) =r_m-r_k\leq0=u_2^\ast(k,k), \qquad \text{and} \qquad u_2^\ast(k,c) =-r_k<0=u_2^\ast(k,k),\] 
so \((k,k)\) is a Nash equilibrium of \(\Gamma^\ast|_{F^k}\).

Define \(\widehat\Lambda^k \in\mathcal P^{\mathrm{fa}}(A_1^\ast\times A_2^\ast)\) by \(\widehat\Lambda^k(h)\coloneqq h(k,k)\) for every \(h\in\ell^\infty(A_1^\ast\times A_2^\ast)\), and let \(\widehat\Lambda_i^k\) denote the \(i\)-th marginal of \(\widehat\Lambda^k\). Regard \((\widehat\Lambda^k)_{k\in\N}\) as a net indexed by \(\N\) under \(\geq\). Because every net in a compact space has a convergent subnet, there is a directed set \(E\), a subnet \((\widehat\Lambda^{k_\beta})_{\beta\in E}\), and a finitely additive probability \(\widehat\Lambda^\ast \in\mathcal P^{\mathrm{fa}}(A_1^\ast\times A_2^\ast)\) such that \(\widehat\Lambda^{k_\beta}\to\widehat\Lambda^\ast\) weak-*. By the definition of a subnet, for every \(K\in\N\), \(k_\beta\geq K\) eventually.

The net \((F^{k_\beta})_{\beta\in E}\) is, therefore, exhaustive. For every \(\beta\in E\), the finitely additive probability \(\widehat\Lambda^{k_\beta}\) is induced by the degenerate mixed Nash equilibrium concentrated on \((k_\beta,k_\beta)\), so \(\widehat\Lambda^\ast\in\operatorname{Eq}^{\mathrm{Fin}}(\Gamma^\ast)\).

For every \(k\in\N\) and \(i\in\{1,2\}\), \(\widehat\Lambda_i^k(\mathbf 1_\N)=1\). For each fixed \(n\in\N\), \(\widehat\Lambda_i^k(\mathbf 1_{\{n\}})=0\) whenever \(k>n\). Since \(\widehat\Lambda_i^{k_\beta}\to\widehat\Lambda_i^\ast\) weak-* and \(k_\beta>n\) eventually for each fixed \(n\in\N\), \(\widehat\Lambda_i^\ast(\mathbf 1_\N)=1\) and \(\widehat\Lambda_i^\ast(\mathbf 1_{\{n\}})=0\) for every \(n \in \N\). By \Cref{lem:linear-nullity,lem:ladder-core-dominance}, every dominated singleton is \(\widehat\Lambda_i^\ast\)-null, whereas the dominated set \(\N\) is not \(\widehat\Lambda_i^\ast\)-null.
\end{proof}

\begin{corollary}\label{cor:point-set-nullity-boundary}
The correspondence \(S^{\mathrm{pt}}\) is universally nonempty, has the equilibrium property, and satisfies point-nullity on \(\mathcal G_{\mathrm{bd}}\). If \(\mathcal G\) is a class of games satisfying \Cref{ass:bounded-game} that contains \(\Gamma^\ast\), no solution correspondence on \(\mathcal G\) has the equilibrium property while being universally nonempty and satisfying set-nullity. Moreover, \(S^{\mathrm{pt}}(\Gamma^\ast)\) contains a pair that assigns zero mass to every dominated singleton and mass one to each player's dominated set.
\end{corollary}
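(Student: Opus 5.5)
This corollary gathers three results already proved, so the plan is to cite each and check that nothing is missing. I would treat its three sentences in turn.

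For the first sentence, I invoke \Cref{thm:universal-point-nullity} directly. It says that \(S^{\mathrm{pt}}\) on \(\mathcal G_{\mathrm{bd}}\) is universally nonempty, satisfies point-nullity, and has the equilibrium property. Nonemptiness comes from Theorem~C of \citet{KhanPedersenStinchcombe2026}, or from the self-contained argument using a compactness subnet along the directed set of finite restrictions. Point-nullity comes from \Cref{lem:finitely-approximable-properties}. The equilibrium property comes from summing the two deviation inequalities and using \(u_1+u_2=0\).

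For the second sentence, I restate \Cref{thm:universal-setwise-impossibility}. Suppose \(\mathcal G\ni\Gamma^\ast\) and \(S\) has the equilibrium property and set-nullity. Then \(S(\Gamma^\ast)\subseteq\mathcal E^{\mathrm{set}}(\Gamma^\ast)\). By \Cref{prop:ladder-core-obstruction} this set is empty, so \(S\) is not universally nonempty.

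For the third sentence, I take \(\widehat\Lambda^\ast\) from \Cref{cor:ladder-tail-limit}, so that \((\widehat\Lambda^\ast_1,\widehat\Lambda^\ast_2)\in S^{\mathrm{pt}}(\Gamma^\ast)\). By \Cref{lem:ladder-core-dominance}, each player's dominated set is \(D_i(\Gamma^\ast)=\N\). The same corollary gives \(\widehat\Lambda_i^\ast(\mathbf 1_{\{n\}})=0\) for every \(n\in\N\) and \(\widehat\Lambda_i^\ast(\mathbf 1_\N)=1\). Read as masses, these say the pair puts zero mass on every dominated singleton and mass one on each dominated set.

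There is no real obstacle here. The only point needing care is that \(\Gamma^\ast\) itself satisfies \Cref{ass:bounded-game}, so that \(\Gamma^\ast\in\mathcal G_{\mathrm{bd}}\) and the first and third claims apply to it. This holds because \(\Gamma^\ast\) has two players, nonempty action sets, and payoffs bounded by \(1\) in absolute value, since \(|r_n-r_m|<1/2\).
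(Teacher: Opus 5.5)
Your proposal is correct and matches the paper's proof. The paper's proof likewise cites \Cref{thm:universal-point-nullity}, \Cref{thm:universal-setwise-impossibility}, and \Cref{cor:ladder-tail-limit} for the three claims. Your extra check that \(\Gamma^\ast\in\mathcal G_{\mathrm{bd}}\), since its payoffs are bounded by \(1\), is a detail the paper leaves implicit.
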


\begin{proof}[Proof of \Cref{cor:point-set-nullity-boundary}]
The first claim is \Cref{thm:universal-point-nullity}. The impossibility claim is \Cref{thm:universal-setwise-impossibility}. The final claim follows from \Cref{cor:ladder-tail-limit}.
\end{proof}

\section{Discussion}

My main result is motivated by a tension between two natural goals: finding an equilibrium in every bounded game and ensuring that dominated actions do not affect the solution. One influential route to the first goal enlarges the space of mixed strategies from countably additive to finitely additive probabilities. This ensures a compact strategy space even when action sets are arbitrary, but it also makes the payoff from mixed play potentially ambiguous: averaging first over one player's action and then over the other's need not provide the same answer as averaging in the opposite order.

Earlier papers deal with this problem in several ways.\footnote{\citet{Fenstad1967,Marinacci1997} identify conditions under which the two orders agree; \citet{Young1971,Thomsen1978} provide closely related criteria; and \citet{HarrisStinchcombeZame2005} seek representations by compact games with continuous payoffs.} Other papers keep a broader class of payoffs but change how they evaluate mixed play.\footnote{\citet{Yanovskaya1970,SchervishSeidenfeld1996} specify how payoffs are evaluated when the two orders of averaging disagree; \citet{FleschVermeulenZseleva2017,Vasquez2017} compare ``upper'' and ``lower'' possible payoffs; \citet{Milchtaich2023} requires strategies to concentrate on actions that are nearly best replies; and \citet{CapraroScarsini2013} obtain existence by imposing an algebraic structure.} A separate strand begins with finite games.\footnote{\citet{SimonStinchcombe1995} compare trembling-hand refinements with refinements obtained as limits of finite games; \citet{BajooriFleschVermeulen2013} strengthen the latter approach for compact action spaces; \citet{Stinchcombe2005} develops equilibrium through finite approximations and generalized integration; and \citet{KhanPedersenStinchcombe2026} establish universal existence via finitely-additive probabilities.} Special mention is due to \citet{FleschVermeulenZseleva2021}, who provide a close antecedent: in a modified \citet{Wald1945} game, their finitely additive equilibrium concept admits an equilibrium in which one player assigns probability zero to each member of a countable set of uniformly dominated actions and probability one to the set as a whole.\footnote{In the terminology of my paper, that equilibrium has point-nullity but not set-nullity.} But that is one particular solution concept--I ask whether there is some universally nonempty solution concept with the equilibrium property that has set-nullity. \Cref{prop:ladder-core-obstruction} and \Cref{thm:universal-setwise-impossibility} show that it cannot, whereas \Cref{thm:universal-point-nullity} shows that point-nullity is compatible with universal nonemptiness and the equilibrium property.

In short, my contribution is a sharp delineation rather than another equilibrium construction: finite approximation can eliminate each fixed dominated action, but no universally existing equilibrium solution can in general be required to ignore the entire, possibly infinite, set of dominated actions.

\bibliography{sample}

\end{document}